\newtheorem{theorem}{Theorem}[section]
\newtheorem{lemma}[theorem]{Lemma}
\newtheorem{proposition}[theorem]{Proposition}
\newtheorem{corollary}[theorem]{Corollary}
\theoremstyle{definition}
\theoremstyle{remark}
\newtheorem{remark}[theorem]{Remark}
\newcommand{\uinorm}[1]{{\left\vert\kern-0.25ex\left\vert\kern-0.25ex\left\vert #1
 \right\vert\kern-0.25ex\right\vert\kern-0.25ex\right\vert}}
\begin{document}

\title[A new estimation of the quantum Chernoff bound]{A new estimation of the quantum Chernoff bound}

\author[M. Kian,T. H. Dinh, M. S. Moslehian, \MakeLowercase{and} H. Osaka]{Mohsen Kian$^1$, Trung Hoa Dinh$^2$, Mohammad Sal Moslehian$^3$, \MakeLowercase{and} Hiroyuki Osaka$^4$}

\address{$^1$Department of Mathematics, University of Bojnord, P. O. Box 1339, Bojnord 94531, Iran}
	\email{kian@ub.ac.ir }

\address{$^2$Department of Mathematics and Statistics, Troy University, Troy, AL 36082, USA}
\email{thdinh@troy.edu}

\address{$^3$Department of Pure Mathematics, Ferdowsi University of Mashhad, P. O. Box 1159, Mashhad 91775, Iran}
\email{moslehian@um.ac.ir}

\address{$^4$Department of Mathematical Sciences, Ritsumeikan University, Kusatsu, Shiga 525-8577, Japan}
\email{osaka@se.ritsumei.ac.jp}

\subjclass[2020]{Primary 47A63; Secondary 46L30, 46N50, 15A15.}
\keywords{Matrix monotone function, trace, quantum Chernoff bound, perspective functions.}

 \begin{abstract}
Relating to finding possible upper bounds for the probability of error for discriminating between two
quantum states, it is well-known that
\begin{align*}
 \mathrm{tr}(A+B) - \mathrm{tr}|A-B|\leq 2\, \mathrm{tr}\big(f(A)g(B)\big)
\end{align*}
holds for every positive-valued matrix monotone function $f$, where $g(x)=x/f(x)$, and all positive definite matrices $A$ and $B$.

In this paper, we introduce a new class of functions that satisfy the above inequality. As a consequence, we derive a novel estimation of the quantum Chernoff bound. Additionally, we characterize matrix decreasing functions and establish matrix Powers-Störmer type inequalities for perspective functions.
\end{abstract}

\maketitle

\section{Introduction}
\vskip 3mm

Quantum hypothesis testing analyzes the optimality of measurement techniques designed to minimize the probability of error when distinguishing between two quantum states, denoted by $\rho$ and $\sigma$. Recall that a quantum state is represented by a positive semidefinite matrix with trace one. A key measure in this context is the \emph{quantum Chernoff bound} defined as
$$CH(\rho,\sigma)=\min\{\mathrm{tr}(\rho^s \sigma^{1-s});\, 0< s< 1\}.$$

The authors of \cite{Ad-} established an inequality relating the trace distance to the quantum Chernoff bound. Specifically, for positive definite matrices $A$ and $B$, they proved in \cite[Theorem 1]{Ad-} that for every $s\in[0,1]$,
\begin{align}\label{adhhh}
\mathrm{tr}(A+B) - \mathrm{tr}|A-B|\leq CH(A, B) \le 2\, \mathrm{tr}(A^s B^{1-s}).
\end{align} 
Notably, the left-hand side of (\ref{adhhh}) is independent of the function $x^s$. This observation suggests the possibility of refining (\ref{adhhh}) by extending its validity to a broader class of functions—beyond just power functions of the form $x^s$. Such an extension can lead to a generalized quantum Chernoff bound, offering new insights into quantum hypothesis testing.

Let $\mathcal{F}$ denote the class of matrix monotone functions $f: [0,\infty)\to[0,\infty)$ satisfying $f(0)>0$. For $A,B\in\mathbb{P}_n$, define
 $$CH_\mathcal{F}(A,B):=\inf\{\mathrm{tr}(f(A)g(B));\,\, f\in \mathcal{F},\, g(x)=x/f(x)\}.$$
Now, consider the subclass $ \mathcal{F}_0=\{f(x)=x^s;\, s\in[0,1]\}$. It follows that $$CH_{\mathcal{F}_0}(A,B)=CH(A,B).$$ This establishes a connection between the generalized definition $CH_\mathcal{F}(A,B)$ and the classical quantum Chernoff bound.

To explore potential refinements, the authors of \cite{HHH} replaced $A^s$ with $f(A)$ in \eqref{adhhh}, where $f$ is a general matrix monotone function defined on $[0,\infty)$ satisfying  $f(x)>0$ for all $x>0$. They \cite[Theorem 2.1]{HHH} established the following inequality:
\begin{align}\label{hhh}
\mathrm{tr}(A+B) - \mathrm{tr}|A-B|\leq 2 \mathrm{tr} (f(A)g(B)),
\end{align}
where $g(x)=x/f(x)$.
Consequently,
$$\frac{1}{2}\mathrm{tr}(A+B) - \phi(A,B)\leq CH_\mathcal{F}(A,B),$$
where $\phi(\rho,\sigma)=\|\rho-\sigma\|_1/2$ is the trace distance between two quantum states $\rho$ and $\sigma$.
In particular, 
\begin{align}\label{qch}
 1-CH(\rho,\sigma)\leq \phi(\rho,\sigma)\leq \sqrt{1-CH(\rho,\sigma)^2}
\end{align}
holds. The inequality
$$1 - \phi(\rho,\sigma)\leq CH_\mathcal{F}(\rho,\sigma)$$ may give a better estimation than \eqref{qch}; see \cite[Remark 3]{HHH}.

 Note that inequality \eqref{adhhh} is referred to as the \emph{Powers-St{\o}rmer's inequality} in the literature (see \cite[Section 2.4]{BKMS}). The second inequality in \eqref{qch} follows trivially from the relation $$\phi(\rho,\sigma)\leq \sqrt{1-(\mathrm{tr}(\rho^{1/2} \sigma^{1/2}))^2}$$ as shown in the arXiv version of \cite{Ad-}. It is also a direct consequence of \emph{Fuchs-van de Graaf’s inequality}: $$\phi(\rho,\sigma)\leq \sqrt{1-F(\rho,\sigma)^2},$$ where $F(\rho,\sigma)= \mathrm{tr}\left|\rho^{1/2} \sigma^{1/2}\right|$ is the so-called \emph{Fidelity}; see \cite{fv}. In fact, the right side of \eqref{qch} can be expressed more precisely as
\begin{align}\label{qchnew}
\phi(\rho,\sigma)\leq\sqrt{1-F(\rho,\sigma)^2}\leq\sqrt{1-(\mathrm{tr}(\rho^{1/2} \sigma^{1/2}))^2}\leq\sqrt{1-CH(\rho,\sigma)^2}.
\end{align}

In the literature, efforts have been made to extend this inequality to the setting of von Neumann algebras, as explored in \cite{jops}. It is shown in \cite[Theorem 1.12]{HH2} that \eqref{hhh} remains valid for the class $\mathcal C_{2n}$ of interpolation functions of order $2n$ provided that $f\circ g^{-1}$ is also a function in $\mathcal C_{2n}|_{g(0, \infty)}$. Notably, the class $\mathcal C_{2n}$ is broader than $\mathcal F$, which implies the inequality
$$
CH_{\mathcal C_{2n}}(A, B) \le CH_{\mathcal F}(A, B).
$$

In this paper, we aim to extend inequality \eqref{adhhh} to a new class of functions that goes beyond matrix monotone and matrix interpolation functions. This extension has the potential to yield a tighter upper bound in \eqref{hhh}. To support our claim, we provide several examples of such functions. In addition, we investigate corresponding matrix inequalities for this new class. As part of our exploration, we employ these inequalities to characterize matrix decreasing functions, which need not be positive.

\section{Preliminary}
Throughout the paper, let $\mathbb{M}_n$ denote the algebra of all $n\times n$ matrices with complex entries. A capital letter represents a matrix. If a matrix $A$ is positive semidefinite (positive definite, respectively), we write $A\geq 0$ ($A>0$, respectively). The well-known L\"{o}wner partial order is defined accordingly, that is, $A\leq B$ whenever $B-A\geq0$. We denote the set of all positive definite matrices in $\mathbb{M}_n$ as $\mathbb{P}_n$. The canonical trace mapping on $\mathbb{M}_n$ is denoted by $\mathrm{tr}(\cdot)$.

A continuous function $f: J \subseteq \mathbb{R}\to\mathbb{R} $ is said to be \emph{matrix monotone} (\emph{matrix decreasing}, respectively) if
$$A\leq B \quad \Longrightarrow \quad f(A)\leq f(B)\quad (f(A)\geq f(B), {\rm respectively})$$
holds for all Hermitian matrices $A$ and $B$ whose eigenvalues are contained in $J$. If for every $\lambda\in[0,1]$, we have $f(\lambda A+(1-\lambda)B)\leq \lambda f(A)+ (1-\lambda)f(B)$, then $f$ is called \emph{matrix convex}. The power function $f(t)=t^r$ is matrix monotone on $(0,\infty)$ if and only if $r\in [0,1]$. It is matrix convex on $(0,\infty)$ if $r\in [-1,0]\cup[1,2]$. If $r\in[-1,0]$, then the function $f$ is a matrix   decreasing function.

For every $A\in\mathbb{M}_n$, the positive semidefinite matrix $(A^*A)^{1/2}$ is called the \emph{absolute value} of $A$ and is denoted by $|A|$. Every Hermitian matrix $A$ has a \emph{Jordan decomposition}, $A=A_+-A_-$ in which $A_+=\frac{|A|+A}{2}$ and $A_-=\frac{|A|-A}{2}$ are positive semidefinite matrices and are called the \emph{positive part} and \emph{negative part} of $A$, respectively.

\vskip 3mm
\section{New estimation with new class of functions}
\vskip 3mm

Let $\widetilde{\mathcal{F}}$ be the class of continuous functions $f:[0,\infty)\to[0,\infty)$ for which $g^{-1}(x) = (x/f(x))^{-1}$ exists and $f\circ g^{-1}$ is matrix monotone on $g(0, \infty)$. It is straightforward to verify that for $s \in (1/2, 1),$ the function $f(x) = x^s$ is not in $\widetilde{\mathcal{F}}$, because $f\circ g^{-1}(x) = x^{s/(1-s)}$ is not an operator monotone function. At the same time, $x^s \in \widetilde{\mathcal{F}}$ for all $s \in [0, 1/2].$

\begin{proposition}
The following statements are true for the case of matrices of order $n \ge 2$:
\begin{itemize}
    \item[(i)] $\widetilde{\mathcal{F}} \setminus \mathcal{F} \neq \emptyset;$
    \item[(ii)] $\widetilde{\mathcal{F}} \setminus \mathcal{C}_{2n} \neq \emptyset.$
\end{itemize}
\end{proposition}
\begin{proof}
(i) It is sufficient to give an example of functions in $\widetilde{\mathcal{F}} \setminus \mathcal{F}$. Let us consider the function $f(x)=1/2(\sqrt{x(x+8)}-x)$. This function is continuous and positive-valued on $[0,\infty)$ but not $2$-monotone. Let us consider the functions $g(x)=\frac{x}{f(x)}=\frac{2x}{\sqrt{x(x+8)}-x}$ and $h(x)=\left(f\circ g^{-1}\right)(x)=\frac{2x}{x+1}$. The function $h$ is a known matrix monotone function on $[0, \infty)$, and so on $g(0, \infty) = (0, 2)$. Therefore, $f \in \widetilde{\mathcal{F}}.$

(ii) We need to show that the Lambert function belongs to $\widetilde{\mathcal{F}}$ with order $n=2$, and it is not an interpolation function of order $4.$ The \emph{Lambert function}, denoted by $W_k(x)$, also known as the \emph{omega function}, is the multi-valued (complex) inverse function of $f(z)=z\exp(z)$. If $x$ and $y$ are real numbers, then the equation $x=y\exp(y)$ has a solution (in terms of $y$) for every $x\geq -1/e$. For $-1/e\leq x<0$, there are two solutions, denoted by $W_0(x)$ and $W_{-1}(x)$. However, for every $x\geq0$, the solution is unique and is $W_0(x)$. The function $W_0$ is the principal branch of the Lambert function.

It is well-known that the exponential function $x\mapsto \exp(x)$ is not matrix monotone; see \cite[Example 1.8]{FMPS}. Consequently, we can find positive semidefinite matrices $A$ and $B$ such that $A\leq B$ but $\exp(A)\nleq \exp(B)$. Hence, it is enough to find positive definite matrices $X$ and $Y$ such that
\begin{align}\label{poe}
 0< X\leq Y \quad\mbox{and}\quad X\exp(X)\leq Y\exp(Y) \quad\mbox{and}\quad \exp(X)\nleq\exp(Y).
\end{align}
 By considering $A=X\exp(X)$ (i.e., $W_0(A)=X$) and $B=Y\exp(Y)$ (i.e., $W_0(B)=Y$) we can observe that $A\leq B$ but $\exp(W_0(A))=\exp(X)\nleq\exp(Y)=\exp(W_0(B))$.
Utilizing Matlab software, one can find matrices $X$ and $Y$ satisfying \eqref{poe}. For example, consider
\begin{align*}
X=\begin{bmatrix}
	10.98647 & 4.47043\\
	4.47043 & 5.18859
\end{bmatrix}
 \quad \mbox{and}\quad
  Y=\begin{bmatrix}
 	54.88824 & 48.34083 \\
 	48.340830 & 56.0955
 \end{bmatrix}.
\end{align*}
Since the class $\mathcal{C}_{2n}$ is a subset of the class of operator strictly monotone functions on $[0, \infty)$, from the above example one can see that the  Lambert function $W_0(x)$ is not an interpolation function of order $\mathcal C_4.$. Therefore, it is not in $\mathcal C_{2n}$ for any $n \ge 2.$

Finally, the Lambert function $f(x)=W_0(x)$ is in $\widetilde{\mathcal{F}}$. Since $f(x) \exp(f(x))=x$ for $x\geq0$, the matrix monotone function $h(x)=\log(x)$ satisfies $h\circ g=f$, where $g(x)=x/W_0(x)$. \end{proof}
\begin{remark}
 The Lambert function has numerous applications in quantum statistics \cite{valluri}, physics \cite{Istvan-Mezo}, biochemistry \cite{veberic}, and various other fields. It is used in solving differential equations, enumerating trees in combinatorics, and more. For further information on the Lambert function and its applications, we refer the reader to the book \cite{Istvan-Mezo}.

    The graphs of $f(x)=W_0(x)$ and $g(x)=x/W_0(x)$ are shown in Figure 1.
\begin{figure}[h!]\label{aks1}
	\begin{center}
	$
\begin{array}{cc}
 \includegraphics[height=5cm,width=5cm]{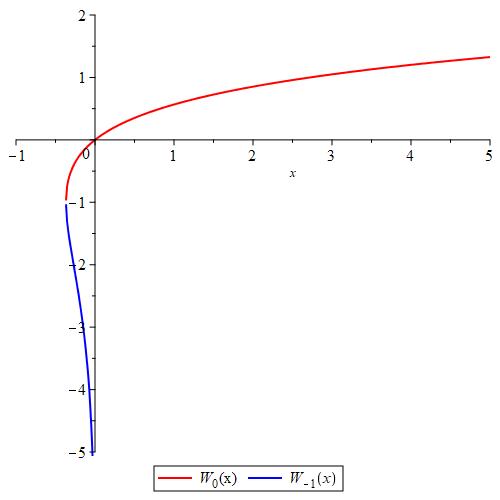} &
 \includegraphics[height=5cm,width=5cm]{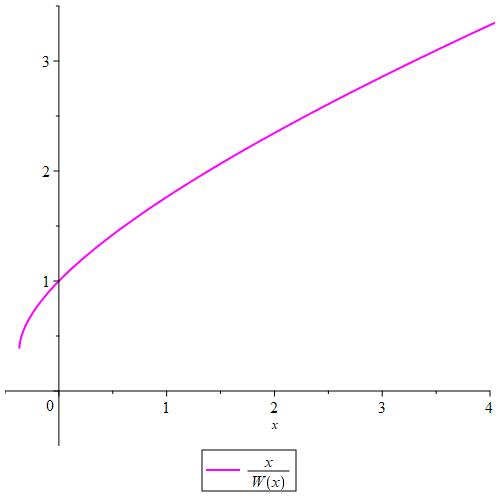}
 \end{array}
$\end{center}
\caption*{Graphs of $f(x)=W(x)$ (left) and $g(x)=x/W(x)$ (right)}
\end{figure}
\end{remark}

In the following theorem, we establish a Powers-Stormer type inequality for functions in the class $\widetilde{\mathcal{F}}$.

\begin{theorem}\label{thm}
Let $f$ be a continuous positive-valued function on $[0,\infty)$ and let $g(x)=x/f(x)$. If $g$ is invertible and $f\circ g^{-1}$ is matrix monotone, then the inequality
\begin{align}\label{as4}
\mathrm{tr} (A+B) -\mathrm{tr}|A-B| \leq 2\,\mathrm{tr} (g(A)f(B))
\end{align}
holds for all positive definite matrices $A$ and $B$.
\end{theorem}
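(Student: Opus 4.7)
The plan rests on the variational characterization $\mathrm{tr}(A+B)-\mathrm{tr}|A-B|=2\min_{Q}\mathrm{tr}(QA+(I-Q)B)$, where the minimum is over orthogonal projections and is attained at the projection $P$ on the range of $(B-A)_+$. It therefore suffices to exhibit a single projection $Q$ with $\mathrm{tr}(f(A)g(B))\geq\mathrm{tr}(QA+(I-Q)B)$, and the key will be to choose $Q$ adapted to $g(B)-g(A)$, so that the matrix monotone function $h:=f\circ g^{-1}$ can be handled by Lemma~\ref{lm1}.

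Concretely, I set $U:=g(A)$, $V:=g(B)$, and $h:=f\circ g^{-1}$, so that $f(A)=h(U)$, $f(B)=h(V)$, $A=Uh(U)$, $B=Vh(V)$. Let $P_L$ denote the projection on the range of $(U-V)_+$ and take $Q:=I-P_L$. Using $f(A)g(A)=A$ and $f(B)g(B)=B$, the bound $\mathrm{tr}(f(A)g(B))\ge\mathrm{tr}(QA+(I-Q)B)$ rewrites as
\begin{align*}
\mathrm{tr}\bigl(Qh(U)(V-U)\bigr)+\mathrm{tr}\bigl(P_L(h(U)-h(V))V\bigr)\geq 0,
\end{align*}
and I will establish each summand to be non-negative separately.

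The first term is immediate: $Q$ is a spectral projection of $V-U$, so it commutes with $V-U$ and $(V-U)Q=(V-U)_+$; trace cyclicity then gives $\mathrm{tr}(Qh(U)(V-U))=\mathrm{tr}(h(U)(V-U)_+)\geq 0$, a trace of a product of two positive operators. For the second term I apply Lemma~\ref{lm1} (the inequality~\eqref{as0}) to the positive matrix monotone function $h$ with the positive matrices $A_L:=V$ and $B_L:=U$; the associated projection on $(B_L-A_L)_+=(U-V)_+$ is precisely $P_L$, and the lemma yields $\mathrm{tr}(P_LV(h(U)-h(V)))\geq 0$. This is a non-negative real number, hence equals its complex conjugate, which, by the self-adjointness of $P_L$, $V$ and $h(U)-h(V)$ together with trace cyclicity, equals $\mathrm{tr}(P_L(h(U)-h(V))V)$; the second term is therefore $\geq 0$ as well.

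The principal obstacle is recognising that the natural choice $Q=P$ (the projection on $(B-A)_+$) couples $h$ to its arguments nonlinearly through $B-A=Vh(V)-Uh(U)$, resisting any clean application of Lemma~\ref{lm1}; anchoring $Q$ instead to the simpler quantity $V-U=g(B)-g(A)$ decomposes the inequality into two trace-positive summands that follow, respectively, from elementary positivity and from Lemma~\ref{lm1} applied to $h$.
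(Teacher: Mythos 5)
Your proof is correct and is essentially the paper's own argument: the key step in both is an application of Lemma~\ref{lm1}, inequality~\eqref{as0}, to $h=f\circ g^{-1}$ evaluated at $g(A)$ and $g(B)$, with the projection attached to the positive part of $g(B)-g(A)$ (you merely swap the roles of $A$ and $B$). Your variational characterization of $\tfrac12\left(\mathrm{tr}(A+B)-\mathrm{tr}|A-B|\right)$ as a minimum over projections and the positivity of $\mathrm{tr}\big(h(U)(V-U)_+\big)$ are just repackagings of the paper's estimates \eqref{pl3} and \eqref{pl2}, so the two proofs coincide in substance.
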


As a consequence, from Theorem \ref{thm} and \eqref{hhh} we obtain the following inequalities, which give a better estimation in quantum hypothesis testing.
\begin{theorem}
With $CH_{\widetilde{\mathcal{F}}\cup\mathcal{F}}=\inf\{\mathrm{tr}\,(f(\rho)g(\sigma));\,\, f\in \mathcal{F}\cup\widetilde{\mathcal{F}},\, g(x)=x/f(x)\}$,  the double inequality
$$1-CH_{\widetilde{\mathcal{F}}\cup\mathcal{F}}(\rho,\sigma)\leq \phi(\rho,\sigma)\leq \sqrt{1-CH_{\widetilde{\mathcal{F}}\cup\mathcal{F}}(\rho,\sigma)^2}
$$
holds.
\end{theorem}

We now proceed to prove Theorem \ref{thm}. To this end, we require the following lemma, whose proof can be found in \cite[Corollary 1.5]{HH2} within the context of interpolation functions. Since the inequality in Lemma \ref{lm1} is of independent interest, we provide an alternative proof in the appendix, following the general approach outlined in \cite[Lemma 1]{Ad-}.

\begin{lemma}\label{lm1}
Let $f$ and $g$ be positive-valued matrix monotone and matrix   decreasing functions on $(0,\infty)$, respectively, and let $A, B \in\mathbb{P}_n$. If $P$ is the projection on the range of $(B-A)_+$, then
\begin{align}\label{as0}
\mathrm{tr}(PA(g(B)-g(A))) \geq 0
\end{align}
 and
 \begin{align}\label{as1}
 \mathrm{tr}(PA(f(A)-f(B))) \geq 0.
\end{align}
In particular,
$$\mathrm{tr}(PA(A^r-B^r)) \geq 0 \quad\mbox{and}\quad \mathrm{tr}(PA(B^s-A^s)) \geq 0 $$
holds for every $r\in[-1,0]$ and $s\in[0,1]$.
\end{lemma}

\textbf{Proof of Theorem \ref{thm}.} Suppose that the functions $g$ and $h$ are defined on $[0,\infty)$ as follows: $g(x)=x/f(x)$ and $h(x)=\left(f\circ g^{-1}\right)(x)$. Assume that $P$ is the projection onto the range of $(g(B)-g(A))_+$. An application of Lemma \ref{lm1} to the function $h$ and with $g(A)$ and $g(B)$ in place of $A$ and $B$, respectively, implies that
 $$\mathrm{tr}(Pg(A)(h(g(B))-h(g(A))))\geq 0.$$
 Noting that $h\circ g=f$ and subtracting both sides from $\mathrm{tr}(P(B-A))$ we conclude that
 $$\mathrm{tr}(P(B-A))-\mathrm{tr}(Pg(A)(f(B) -f(A)))\leq \mathrm{tr} (P(B-A)).$$
As $f(x)g(x)=x$, this gives
\begin{align}\label{pl1}
\mathrm{tr} (P(g(B)-g(A))f(B)) =\mathrm{tr}(P(B-g(A)f(B)))\leq \mathrm{tr}(P(B-A)).
\end{align}
 On the other hand, since $P$ is the projection on the range of $(g(B)-g(A))_+$, we have
{\small\begin{align}\label{pl2}
\mathrm{tr} (P(g(B)-g(A))f(B)) &= \mathrm{tr}\left((g(B)-g(A))_+\, f(B)\right)\nonumber\\
& \geq \mathrm{tr}((g(B)-g(A))f(B))=\mathrm{tr}(B-g(A)f(B)),
\end{align}}
where the inequality follows from the fact that $f(B)$ is a positive semidefinite matrix and $\mathrm{tr}(X_+Y_+)\geq \mathrm{tr}(XY_+)$ holds for all Hermitian matrices $X$ and $Y$. Moreover,
\begin{align}\label{pl3}
 \mathrm{tr}(P(B-A))\leq \mathrm{tr} ((B-A)_+)=\frac{1}{2}\,\left(\mathrm{tr} (B-A)+\mathrm{tr}|B-A|\right),
\end{align}
because $\mathrm{tr} (X_+) $ is the
maximum of $\mathrm{tr}(QX)$ over all projections $Q$.

 From \eqref{pl1} and \eqref{pl2} and \eqref{pl3} we can write
\begin{align*}
\mathrm{tr}B -\mathrm{tr}(g(A)f(B)) \leq 1/2\,\left(\mathrm{tr} (B-A)+\mathrm{tr}|B-A|\right),
\end{align*}
which is the desired inequality.

\section{Matrix Powers--St{\o}rmer's inequality for perspective functions and characterization of matrix decreasing functions}
\vskip 3mm

The authors of \cite{HKH} showed that the inequality
 \begin{align}\label{set}
 A+B-|A-B| \leq 2\, A \sigma_f B
 \end{align}
holds for all $A,B \in \mathbb{P}_n$ and all positive-valued matrix monotone functions $f$ on $[0,\infty)$ with $f(1)=1$, assuming that $AB+BA$ is positive definite. In inequality \eqref{set}, the joint monotonicity of the matrix mean $\sigma_f$ is crucial. Recall that the matrix mean $\sigma_f$ corresponding to the matrix monotone function $f$ is defined using the Kubo-Ando theory as follows:
\begin{align}\label{mmean}
\sigma_f(A,B)=A^{1/2}f(A^{-1/2}BA^{-1/2})A^{1/2},
\end{align}
where $A$ and $B$ are positive definite matrices.


The matrix perspective function $\mathcal{P}_f$  associated with a function $f$ is defined by a formula similar to that in \eqref{mmean}:
$$\mathcal{P}_f(A,B)=B^{1/2}f(B^{-1/2}AB^{-1/2})B^{1/2}.$$
It is known (see, for example, \cite{HSW}) that
\begin{align}\label{perses}
\mathcal{P}_f(X,Y)=\mathcal{P}_{\tilde{f}}(Y,X)\quad\mbox{and}\quad \mathcal{P}_f(X,Y)=\mathcal{P}_{f^*}(X^{-1},Y^{-1})^{-1}
\end{align}
where $\tilde{f}(x)=xf(x^{-1})$ and $f^*(x)=f(x^{-1})^{-1}$ are called the \emph{transpose function} and the \emph{adjoint function} of $f$, respectively. It is established that $f$ is matrix convex (concave) if and only if $\mathcal{P}_f$ is jointly matrix convex (concave); see \cite{HSW}.

However, if $f$ is matrix decreasing, the associated perspective function $\mathcal{P}_f$ may not be matrix decreasing in all of its variables. Specifically, while $\mathcal{P}_{f}$ is decreasing in the first variable, it is not necessarily decreasing in the second variable. For example if $f(x)=x^{-1}$, then $$\mathcal{P}_f (I,A)=A^2 \ngeqslant C^2=\mathcal{P}_f (I,C),$$ when $A\leq C$.

From the integral representation \eqref{intre} for matrix decreasing functions, we can express $\mathcal{P}_f(A,B)$ as
 \begin{align}\label{rert}
 \mathcal{P}_f(A,B)=\alpha B+\int_{0}^{+\infty}\frac{\lambda+1}{\lambda} ((\lambda BA^{-1}B):B)\, \mathrm{d}\nu(\lambda),
 \end{align}
in which $X:Y=(X^{-1}+Y^{-1})^{-1}$ represents the parallel sum of matrices $X$ and $Y$.
Since the parallel sum is jointly monotone, it follows that $\mathcal{P}_f$ is decreasing in the first variable.

In this section, we investigate a similar relation to \eqref{set} for perspective functions that are defined by a matrix   decreasing function. We also use such inequalities to characterize matrix decreasing functions.

Firstly, we provide a characterization of matrix decreasing functions under certain mild conditions.

\begin{theorem}\label{os}
Let $f:[0,\infty)\to(-\infty,\infty)$ be a continuous function.
\begin{itemize}
    \item[(i)]
 If $f$ is  matrix    decreasing with   $f(0)\leq 0$, then
 \begin{align}\label{sett 2}
 \mathcal{P}_f(A, B) \leq   f(1) (A \nabla_\alpha B - \max\{\alpha, 1- \alpha\}|A-B|)
\end{align}
 for all $A,B>0$ and $\alpha\in[0,1]$,  provided that $A \nabla_\alpha B - \max\{\alpha, 1- \alpha\}|A-B|$ is positive definite, and $A \nabla_\alpha B := (1 - \alpha)A + \alpha B$.

 \item[(ii)] If $f$ is matrix convex satisfying \eqref{sett 2}, then $f$ is matrix decreasing.
\end{itemize}
\end{theorem}
\begin{proof}
{\rm (i)} Assume that  $f$ is matrix decreasing on $[0, \infty)$ and put $g(x) = f(\frac{1}{x})$. Then $g$ is   matrix monotone on $(0, \infty)$. Hence, $\tilde{f}(x) = xg(x)$ is  matrix convex on $[0, \infty)$ by \cite[Theorem~1.13]{FMPS}.  Moreover, $\lim_{x\to\infty}\frac{\tilde{f}(x)}{x} = \lim_{x\to\infty}f(\frac{1}{x}) = f(0) \leq 0$. It follows from \cite[Theorem 7.2(2)]{HUW} that
\begin{align}\label{star}
A_1 \leq A_2 \quad\Rightarrow \quad \mathcal{P}_{\tilde{f}}(A_2, B) \leq \mathcal{P}_{\tilde{f}}(A_1, B)
\end{align}
for all $A_1, A_2, B >0$.
For every  $\alpha\in[0,1]$ we can write
\begin{align*}
A &= (1 - \alpha)A + \alpha B + \alpha (A - B)\\
&\geq A \nabla_\alpha B - \alpha |A - B|\\
&\geq A\nabla_\alpha B - \alpha_0 |A-B|,
\end{align*}
where $\alpha_0=\max\{\alpha, 1- \alpha\}$. Similarly, we have $B \geq A\nabla_\alpha B - \alpha_0|A-B|$.

If the matrix   $A\nabla_\alpha B - \alpha_0 |A-B|$  is positive,  then
\begin{align*}
\mathcal{P}_f(A, B) &\leq \mathcal{P}_f(A\nabla_\alpha B - \alpha_0|A-B|, B)\qquad (\mbox{by \eqref{star}})\\
&= \mathcal{P}_{\tilde{f}}(B, A\nabla_\alpha B - \alpha_0|A-B|) \qquad (\mbox{by \eqref{perses}})\\
&\leq \mathcal{P}_{\tilde{f}}(A\nabla_\alpha B - \alpha_0|A-B|, A\nabla_\alpha B - \alpha_0|A-B|) \qquad (\mbox{by \eqref{star}})\\
&=  f(1) (A\nabla_\alpha B - \alpha_0|A-B|),
\end{align*}
in which we use $f(1)=\tilde{f}(1)$.

{\rm (ii)} Assume that $f$ is matrix convex and \eqref{sett 2} holds. Put $A=t I$  and $B=I$. If $t<1$, then $f(t)\leq f(1)t$ and so $ \lim\sup_{t\to 0+}f(t) \leq 0 $. Moreover, employing \eqref{sett 2} with the same $A$ and $B$, but for $t>1$ we arrive at
 $f(1) \geq f(t)$, whence $\frac{f(t)}{t} \leq \frac{f(1)}{t}$. This implies that $\lim_{t\to \infty}\frac{f(t)}{t} \leq 0$.
It follows from \cite[Corollary~2.7]{OW}  and \cite[Proposition 6.1(iii)]{HSW} that there exist $\lambda, \mu \in \mathbb{R}$ and a matrix connection $\sigma$ such that
$$
\mathcal{P}_f(A, B) = \lambda B + \mu A - A\sigma B
$$
for all $A, B > 0$, where $ \lambda = \lim\sup_{t\to 0^+} f(t)$ and $ \mu = \lim\sup_{t\to\infty}\frac{f(t)}{t}$. Consequently,
$
\mathcal{P}_f(A, B) $
is   decreasing in both variables.
Accordingly, from $f(A) = \mathcal{P}_f (A,I) = \lambda I + \mu A- I\sigma A$, we infer that $f$ is matrix   decreasing.
\end{proof}

\begin{remark}
  We note that Theorem~\ref{os} {\rm (i)} generalizes \cite[Theorem 2.1]{HKH}. Indeed, if $g$ is a positive matrix monotone function with $g(1)=1$, then we can apply Theorem~\ref{os} to the matrix   decreasing function $f=-g$ to get
  \begin{align*}
B\sigma_g A= \mathcal{P}_g(A, B) \geq      A \nabla_\alpha B - \max\{\alpha, 1- \alpha\}|A-B|.
\end{align*}
Letting $\alpha=1/2$, we reach \eqref{set}.

\end{remark}

\begin{corollary}\label{thqw}
 Let $f:[0,\infty)\to\mathbb{R} $ be a matrix    decreasing   function   with $f(0)\leq 0$. Then
 \begin{align}\label{settt}
\frac{1}{2}f(1)(A+B-|A-B|) \geq   \mathcal{P}_f(A,B)
 \end{align}
 holds for all $A,B>0$, provided that $AB+BA$ is a positive definite matrix.
 \end{corollary}
\begin{proof}
Take $\alpha = \frac{1}{2}$ in Theorem~\ref{os}.
Indeed, since $AB + BA \geq 0$,
\begin{align*}
A\nabla_\alpha B - \max\{\alpha, 1-\alpha\}|A-B|
&= \frac{A+B}{2} - \frac{|A-B|}{2}\\
&= \frac{(A^2+B^2 +(AB + BA))^{\frac{1}{2}}}{2}-\frac{|A-B|}{2}\\
&\geq \frac{(A^2 + B^2 - (AB + BA))^{\frac{1}{2}}}{2} - \frac{|A-B|}{2}\\
&= \frac{|A-B|}{2} - \frac{|A-B|}{2} =0
\end{align*}
The result then follows from Theorem~\ref{os} {\rm (i)}.
\end{proof}

\begin{remark}
Let $f:[0, \infty) \to (- \infty, \infty)$ be a continuous function. If $f$ is a matrix   decreasing function, then  $g(x) = f(x) - f(0)$
  is also matrix   decreasing and $g(0) = 0$. For all positive definite matrices $A$ and $B$, there exists a positive real number $t$ such that $0<t I\leq A$ and $0<t I\leq B$.  Hence
  \begin{align*}
    \mathcal{P}_g(A,B)&\leq \mathcal{P}_g(tI,B)=\mathcal{P}_{\tilde{g}} (B,tI) \leq  \mathcal{P}_{\tilde{g}} (tI,tI)=t g(1)I=t(f(1)-f(0))I\leq 0.
  \end{align*}
Consequently,  for any matrix   decreasing function $f:[0, \infty) \to (- \infty, \infty)$ we have
$$\mathcal{P}_f(A,B)=\mathcal{P}_g(A,B)+f(0)B\leq f(0)B$$
 for all positive definite matrices $A$ and $B$.
\end{remark}

In the following theorem we establish a lower bound for $\mathcal{P}_f(A, B)$.

\begin{theorem}\label{prp}
  Let $f: [0, \infty) \to \mathbb{R}$ be   matrix   decreasing with $f(0) \leq 0$. Then
 \begin{align*}
\mathcal{P}_f(A, B) \geq f(1)(A \nabla_\alpha B + \max\{\alpha, 1- \alpha\}|A-B|)
\end{align*}
 for all $A,B>0$ and $\alpha\in[0,1]$.
\end{theorem}
\begin{proof}
As in the proof of Theorem~\ref{os} {\rm (i)}, we have $\lim_{x\to\infty}\frac{\tilde{f}(x)}{x} = \lim_{x\to\infty}f(\frac{1}{x}) = f(0) \leq 0$ and so
\begin{align}\label{star2}
A_1 \leq A_2 \quad\Rightarrow \quad \mathcal{P}_{\tilde{f}}(A_2, B) \leq \mathcal{P}_{\tilde{f}}(A_1, B)
\end{align}
for all $A_1, A_2, B >0$. Moreover,
\begin{align*}
A &= (1 - \alpha)A + \alpha B + \alpha (A - B)\leq A \nabla_\alpha B + \alpha |A - B|\leq A\nabla_\alpha B + \alpha_0 |A-B|
\end{align*}
for every  $\alpha\in[0,1]$, where $\alpha_0=\max\{\alpha, 1- \alpha\}$. Similarly, we have $B \leq A\nabla_\alpha B + \alpha_0|A-B|$.

Therefore,
\begin{align*}
\mathcal{P}_f(A, B) &\geq \mathcal{P}_f(A\nabla_\alpha B + \alpha_0|A-B|, B)\qquad (\mbox{by \eqref{star2}})\\
&= \mathcal{P}_{\tilde{f}}(B, A\nabla_\alpha B + \alpha_0|A-B|) \qquad (\mbox{by \eqref{perses}})\\
&\geq \mathcal{P}_{\tilde{f}}(A\nabla_\alpha B + \alpha_0|A-B|, A\nabla_\alpha B + \alpha_0|A-B|) \qquad (\mbox{by \eqref{star2}})\\
 &\geq f(1)(A\nabla_\alpha B + \alpha_0|A-B|) \  (\tilde{f}(1) = f(1) )
\end{align*}
in which we use $f(1)=\tilde{f}(1)$. \\
\end{proof}

 \begin{remark}
 Let $f$ be a positive matrix monotone function on $[0,\infty)$. Then $-f$ satisfies the conditions of Proposition \ref{prp} and so
  \begin{align*}
B\sigma_f A=\mathcal{P}_f(A, B) \leq f(1)(A \nabla_\alpha B + \max\{\alpha, 1- \alpha\}|A-B|)
\end{align*}
 for all $A,B>0$ and $\alpha\in[0,1]$. In particular, $f(1)=1$ and $\alpha=1/2$ entails
   \begin{align*}
2B\sigma_f A  \leq   A + B +|A-B|
\end{align*}
for all $A,B>0$, which is \cite[Proposition 2.3]{HKH}.
 \end{remark}


\bigskip

\section{Appendix: A new proof of Lemma \ref{lm1} }

\begin{proof}

To prove \eqref{as1}, we employ an integral representation for matrix   decreasing functions due to Hansen  \cite{Ha}  as
\begin{align}\label{intre}
f(x)=\alpha+\int_{0}^{+\infty}\frac{\lambda + 1}{\lambda+x}\,\mathrm{d}\nu(\lambda),
\end{align}
where $\alpha\geq0$ and $\nu$ is a finite positive measure on $[0,\infty)$.

If $\varphi:[0,1]\to\mathbb{P}_n$ is a differentiable map, then
\begin{align}\label{dif}
\frac{\mathrm{d}}{\mathrm{d}y}\left(-\varphi(y)^{-1}\right)=\varphi(y)^{-1}
\left(\frac{\mathrm{d}}{\mathrm{d}y}\varphi(y)\right)\,\varphi(y)^{-1}
\end{align}
and
\begin{align*}
\varphi(0)^{-1}-\varphi(1)^{-1}=\int_{0}^{1}\mathrm{d}y \frac{\mathrm{d}}{\mathrm{d}y}\left(-\varphi(y)^{-1}\right).
\end{align*}
 By considering $\varphi(y)=T+(S-T)y$, the identity
\begin{align}\label{intinv}
T^{-1}-S^{-1}=\int_{0}^{1}\mathrm{d}y \frac{\mathrm{d}}{\mathrm{d}y}(-T-(S-T)y)^{-1}
\end{align}
is valid for all invertible matrices $T$ and $S$. In particular, for any positive real number $\lambda$, the matrices $T=A+\lambda$ and $S=B+\lambda$ are positive definite and we have
\begin{align}\label{wq}
 (A+\lambda)^{-1}-(B+\lambda)^{-1}=\int_{0}^{1}\mathrm{d}y \frac{\mathrm{d}}{\mathrm{d}y}\left(-(A+\lambda+(B-A)y)^{-1}\right).
\end{align}
Moreover, applying \eqref{dif} for a proper function $\varphi$, we have
\begin{align}\label{ner}
\begin{split}
&\frac{\mathrm{d}}{\mathrm{d}y}\left(-(A+\lambda+(B-A)y)^{-1}\right)\\
&=
(A+\lambda+(B-A)y)^{-1}\frac{\mathrm{d}}{\mathrm{d}y}(A+\lambda+(B-A)y)(A+\lambda+(B-A)y)^{-1}\\
&=(A+\lambda+(B-A)y)^{-1}(B-A)(A+\lambda+(B-A)y)^{-1}\\
&=y^{-1}\, Z N Z,
\end{split}
\end{align}
where
\begin{align}\label{qwb}
 N= (B-A)y \qquad \mbox{and}\qquad Z=(A+\lambda+N)^{-1}.
 \end{align}
 Applying the integral representation \eqref{intre}, we can obtain
\begin{align*}
f(A)-f(B)&=\int_{0}^{+\infty}(\lambda+1)\left((A+\lambda)^{-1}-(B+\lambda)^{-1}\right)\,\mathrm{d}\nu(\lambda)\\
&= \int_{0}^{+\infty}(\lambda+1)\,\mathrm{d}\nu(\lambda)\int_{0}^{1} y^{-1}Z N Z \mathrm{d}y,
 \end{align*}
 in which the second equality follows from \eqref{wq} and \eqref{ner}.
 Hence, to prove \eqref{as1}, it is enough to show that $\mathrm{tr} (PAZ N Z)$ is positive. Let $N=N_+ -N_-$ represent the Jordan decomposition of $N$, such that $P$ is the projection onto the range of $N_+$. Decomposing $\mathbb{C}^n$ as the direct sum of the range of $N_+$ and its orthogonal complement, one can express $N$ and $P$ as
\begin{align*}
 N= \begin{bmatrix}
N_+ & 0 \\
0 & -N_-
 \end{bmatrix}
 \quad
 \mbox{and}
 \quad
 P= \begin{bmatrix}
I & 0 \\
0 & 0
 \end{bmatrix};
\end{align*}
see \cite[Section 1.2]{MO}. It follows from \eqref{qwb} that $A=Z^{-1}-N-\lambda$, and hence
\begin{align}\label{er2}
 \mathrm{tr} (PAZNZ) & =\mathrm{tr}\big(P((Z^{-1}-N)ZNZ-\lambda ZNZ)\big)\nonumber \\
 &=\mathrm{tr}\big(PN(Z-ZNZ) -\lambda PZNZ\big)\nonumber\\
 &=\mathrm{tr}\big(N_+(Z-ZNZ)-\lambda PZNZ\big).
\end{align}
 Moreover, since $A$ is positive definite, we have
\begin{align*}
 0\leq ZAZ=Z(Z^{-1}-N-\lambda)Z=Z-ZNZ-\lambda Z^2,
\end{align*}
whence
\begin{align}\label{new21}
\lambda Z^2 \leq Z-ZNZ.
\end{align}
 Finally,
 \begin{align*}
 \mathrm{tr} (PAZNZ) & =\mathrm{tr}\big(N_+(Z-ZNZ)-\lambda PZNZ\big)\quad\qquad(\mbox{by \eqref{er2}})\\
 &\geq \lambda\Big(\mathrm{tr}\big(N_+Z^2\big)-\mathrm{tr}(PZNZ)\Big)\quad\qquad\qquad\quad(\mbox{by \eqref{new21}})\\
 & = \lambda\Big(\mathrm{tr} (ZN_+Z)-\mathrm{tr}(PZNZ)\Big),\\
 & \geq \lambda\Big(\mathrm{tr} (PZN_+Z)-\mathrm{tr}(PZNZ)\Big).
 \end{align*}
The last inequality follows from the fact that $P$ is a projection. The last term on the right is positive as $N_+\geq N$, which completes the proof of \eqref{as1}.

For \eqref{as0}, we use the well-known integral representation for matrix monotone functions
\begin{align}\label{intre2}
g(x)=\alpha+\beta x+ \int_{0}^{+\infty}\frac{x}{x+\lambda}\,\mathrm{d}\mu(\lambda),
\end{align}
with positive scalars $\alpha,\beta$ and positive Borel measure $\mu$ on $(0,\infty)$, as shown in \cite{Bh}. Accordingly, we have
$$PA(g(B)-g(A))=\beta PA(B-A)+PA\int_{0}^{+\infty}\left(\frac{B}{\lambda+B}-\frac{A}{\lambda+A}\right)\,\mathrm{d}\mu(\lambda).$$
For every $\lambda>0$, we have
\begin{align*}
 \frac{B}{\lambda+B}-\frac{A}{\lambda+A}=\lambda(A+\lambda)^{-1}-\lambda(B+\lambda)^{-1}.
\end{align*}
 Therefore, by applying the same argument as in the first part of the proof, we can demonstrate that
 $$\mathrm{tr}\left( PA\left(\frac{B}{\lambda +B}-\frac{A}{\lambda +A}\right)\right) \geq0 $$
for every $\lambda>0$. Therefore, it is enough to prove that
$ \mathrm{tr}(PA(B-A))\geq0$.
To do this, we are aware that $B-A$ is Hermitian, so $(B-A)P=P(B-A)=(B-A)_+\geq 0$. Hence,
\begin{align*}
\mathrm{tr}(PA(B-A)) = \mathrm{tr}(A(B-A)P) = \mathrm{tr}(A(B-A)_+) = \mathrm{tr}(A^{1/2}(B-A)_+A^{1/2})\geq 0.
\end{align*}
 Thus, the proof of \eqref{as0} is now completed.
\end{proof}

\medskip


\noindent \textbf{Acknowledgment.} The forth author's was supported by the JSPS grant for Scientific Research No. 20K03644.

\medskip

\noindent \textit{Conflict of Interest Statement.} On behalf of all authors, the corresponding author states that there is no conflict of interest. 

\medskip

\noindent\textit{Data Availability Statement.} Data sharing is not applicable to this article as no datasets were generated or analyzed during the current study.

\medskip

\end{document}